\newtheorem{lemma}{Lemma}
\newenvironment{proof}[1][Proof]{\textbf{#1.} }{\ \rule{0.5em}{0.5em}}
\begin{document}
\title{Coordinated Anti-Jamming Resilience in Swarm
Networks via Multi-Agent Reinforcement Learning} 

\author{Bahman Abolhassani}
\author{Tugba Erpek}
\author{Kemal Davaslioglu}
\author{Yalin E. Sagduyu}
\author{Sastry Kompella}
\affil{\normalsize Nexcepta, Gaithersburg, MD, USA 
	\thanks{ This material is based upon work supported by the ASA(ALT) SBIR CCoE under Contract No. W51701-24-C-0151.}
\vspace{-.1in}}


\maketitle

\begin{abstract}
Reactive jammers pose a severe security threat to robotic-swarm networks by selectively disrupting inter-agent communications and undermining formation integrity and mission success. Conventional countermeasures such as fixed power control or static channel hopping are largely ineffective against such adaptive adversaries. This paper presents a multi-agent reinforcement learning (MARL) framework based on the QMIX algorithm to improve the resilience of swarm communications under reactive jamming. We consider a network of multiple transmitter–receiver pairs sharing channels while a reactive jammer with Markovian threshold dynamics senses aggregate power and reacts accordingly. Each agent jointly selects transmit frequency (channel) and power, and QMIX learns a centralized but factorizable action-value function that enables coordinated yet decentralized execution. We benchmark QMIX against a genie-aided optimal policy in a no–channel-reuse setting, and against local Upper Confidence Bound (UCB) and a stateless reactive policy in a more general fading regime with channel reuse enabled. Simulation results show that QMIX rapidly converges to cooperative policies that nearly match the genie-aided bound, while achieving higher throughput and lower jamming incidence than the baselines, thereby demonstrating MARL’s effectiveness for securing autonomous swarms in contested environments.
\end{abstract}
\begin{IEEEkeywords}
Anti-jamming, reactive jammer, reinforcement learning, multi-agent reinforcement learning, network resilience.    
\end{IEEEkeywords}

\vspace{-.15in}
\section{Introduction}
\label{sec:Intro}

Wireless networks operating in contested environments face an escalating threat from reactive jammers that sense ongoing transmissions and inject interference only when it is most disruptive. Reactive jammers can also adapt their jamming strategies to spectrum dynamics. This adaptive approach increases jamming effectiveness while reducing the jammer’s own energy use and detectability. As spectrum becomes more crowded and mission-critical communications depend on agile links, there is a pressing need for anti-jamming countermeasures that are themselves adaptive and cooperative.

In many emerging applications such as cooperative unmanned aerial vehicles (UAVs) for search-and-rescue, autonomous ground robot teams for perimeter security, distributed sensor networks for environmental monitoring, and coordinated unmanned maritime systems, communication is not necessarily carried out by a single transmitter, but by swarms of collaborating agents. In such settings, security in swarm-robotic networks becomes a fundamental concern: the communication links between agents are vital not only for coordinating tasks (formation control, sensing, cooperative exploration) but also for maintaining resilience against adversarial interference. Because jammers can disrupt formation coherence, sever information flow, or even partition the swarm, anti-jamming strategies strengthened by multi-agent coordination are essential for resilient communications.

Traditional anti-jamming methods such as static channel hopping, fixed power control, or threshold-based policies are largely rule-driven and optimized for stationary adversaries. While these approaches can handle simple interference, they fail to anticipate and outmaneuver an intelligent jammer whose future actions depend on the defender’s behavior. Reinforcement learning (RL) has therefore emerged as a promising way to frame anti-jamming as a sequential decision problem, enabling transmitters to learn optimal signaling strategies online without prior knowledge of the jammer’s tactics.

RL has been extensively applied to anti-jamming problems in wireless communications \cite{abuzainab2019qos,li2023deepqi,qi2024deep,cdl_mitigation2024,xiao2020rl_edge,wang2021uav_jam,han2017twodim} including deep reinforcement learning (DRL) approaches such as DQN, actor–critic, and continual learning to handle richer state spaces and nonstationary adversaries. Some RL-based methods have specifically addressed reactive jammers, modeling their behavior as part of the environment rather than as a stationary interferer \cite{milcom2025_rl_reactive,zhou2020jamsa,li2022hidden_reactive,zhang2023deceiving,altun2025ofdmim,liu2024hierarchical,pourranjbar2021deceive}. However, most existing RL formulations treat each agent
independently and ignore the potential benefits of coordinated
decision-making across multiple transmitters.

In this paper we go beyond single-agent learning and propose a multi-agent reinforcement learning (MARL) framework for combating reactive jamming in swarm networks, while supporting channel access of multiple transmitters. Each transmitter must simultaneously choose a frequency band (channel) and a discrete power level for each time slot. The jammer continuously senses the total power on its current channel and, based on whether the sensed power exceeds or falls below a threshold, probabilistically decides to stay on the same channel with a high or low detection threshold or to hop to another channel with a new threshold. This Markovian behavior makes the jammer’s state dependent on the recent actions of all transmitters rather than fixed over time. The overall system goal is to maximize the long-term average sum throughput of all agents while respecting per-agent transmit power limits. Because the underlying environment is stochastic and reactive, this joint optimization problem cannot be solved analytically and calls for learning-based decentralized policies.

We present a MARL approach using the QMIX algorithm \cite{rashid2020qmix_jmlr} and tailor it for reactive jammer mitigation in swarms. QMIX trains decentralized policies in a centralized end-to-end fashion by employing a network that estimates joint action-values as a complex non-linear combination of per-agent values conditioned only on local observations. QMIX structurally enforces monotonicity between the joint action-value and per-agent values \cite{rashid2020qmix_jmlr}, which enables tractable maximization during off-policy learning and guarantees consistency between centralized training and decentralized execution (CTDE). In deployment, each agent acts based solely on local observations but in a way that optimizes the joint objective of the team. Thus, the agents can adapt their channel or power allocations in a coordinated fashion while coping with a jammer whose behavior depends on the swarm’s actions. 

By explicitly modeling how transmit actions influence future jamming states, the proposed approach treats anti-jamming as a true Markov Decision Process (MDP) and enables agents to discover cooperative policies that cannot emerge from independent or stateless learning. In addition to jammer mitigation, QMIX coordinates channel access among multiple transmitter–receiver pairs, allowing the swarm network to share spectrum efficiently while resisting interference. Unlike prior MARL studies on static jammers \cite{lv2023uavswarm_jamming}, we model reactive jammers with Markovian dynamics and employ QMIX for coordinated channel–power control under CTDE.

To assess performance, we adopt a two-stage evaluation aligned with the system assumptions. First, we study a baseline regime with constant path-loss channels (no fading) and no channel reuse. In this setting, we derive a genie-aided oracle that perfectly coordinates agents and avoids the reactive jammer, yielding an upper bound on per-agent throughput. We then compare the learned QMIX policy against this oracle across different numbers of channels, showing that QMIX closely approaches the genie benchmark while using only local observations at execution time.

Second, we generalize to a more realistic regime that allows channel reuse and incorporates small-scale Rayleigh block-fading over a coherence interval. To promote spatial reuse among distant agents, we augment the reward with a distance-aware co-channel penalty. In this regime we benchmark QMIX against two lightweight, non-learning policies that rely solely on local sensing: (i) a local Upper Confidence Bound (UCB) strategy (UCB-based channel selection with power adaptation to sensed interference) and (ii) a stateless reactive heuristic (stay-or-switch based on immediate sensing and the last reward). Across diverse topologies and increasing contention, QMIX consistently outperforms both baselines, exhibits lower co-channel interference and jamming incidence, and maintains strong decentralized performance when training is paused, demonstrating robustness to scale and harsher conditions.

The remainder of the paper is organized as follows. Section~\ref{sec:Sys_Model}  describes the system model and reactive jammer dynamics. Section~\ref{sec:learning-based} presents the MARL formulation and the QMIX training procedure under CTDE. Section~\ref{sec:baseline_model} develops the no-reuse baseline and compares QMIX to the genie-aided upper bound. Section~\ref{sec:general_model} introduces channel reuse under Rayleigh fading, defines the rule-based baselines, and reports comparative results. Section~\ref{sec:conclusion} concludes the paper.

\section{System Model}
\label{sec:Sys_Model} 
We consider a wireless network with $N$ transmitter--receiver pairs (agents) and a single reactive jammer, as shown in Fig.~\ref{fig:system-model}. Each transmitter communicates exclusively with its dedicated receiver, and hence the useful signal for receiver $i$ originates only from its paired transmitter $i$. Transmissions from all other transmitters, as well as the jammer, are treated as interference. 

\subsection{Signal Model} At time slot $t$, transmitter $i \in \{1,\dots,N\}$ selects a channel $c_i(t) \in \mathcal{C}$ from the set of $M$ orthogonal channels and a discrete power level $P_i(t) \in \mathcal{P}$. The received signal-to-interference-plus-noise ratio (SINR) for pair $i$ on channel $c_i(t)$ is given by 
\vspace{-.15in}
\begin{equation} 
\text{SINR}_i(t) = \frac{P_i(t) h_{TR}^{(i,i)}(t)}{\sum_{k \neq i, \, c_k(t)=c_i(t)} P_k(t) h_{TR}^{(k,i)}(t) + I_J^{(i)}(t) + \sigma^2}, 
\end{equation} 
where $h_{TR}^{(k,i)}(t)$ denotes the channel gain from transmitter $k$ to receiver $i$, and $I_J^{(i)}(t)$ is the interference caused by the jammer when it transmits on channel $c_i(t)$, given by \[ I_J^{(i)}(t) = \begin{cases} P_J h_{JR}^{(i)}(t), & \text{if jammer transmits on } c_i(t), \\ 0, & \text{otherwise}. \end{cases} \] The instantaneous throughput of agent $i$ is modeled as 

\begin{equation} 
R_i(t) = \log_2\big(1 + \text{SINR}_i(t)\big). \end{equation} 

The system throughput at time $t$ is then
\vspace{-.1in}
\begin{equation}
\vspace{-.15in}
R_{\text{sum}}(t) = \sum_{i=1}^N R_i(t). 
\end{equation} 

\begin{figure}[t]
  \centering
  \includegraphics[width=0.75\columnwidth,height=5cm]{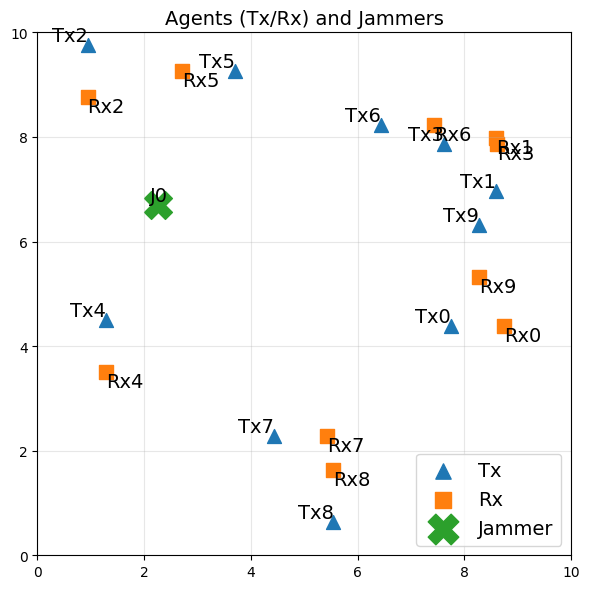}
  \caption{Network topology for $N{=}10$ agents and one jammer}
  \label{fig:system-model}
  \vspace{-.15in}
\end{figure}

\subsection{Reactive Jammer Model} 
The jammer continuously senses the aggregate \emph{received} power on its 
current channel. Let $S_J(t)$ denote the jammer’s current channel and 
$\theta(t)$ its sensing threshold. The received power at the jammer on 
channel $S_J(t)$ is 
\begin{equation}
P_J^{\mathrm{rx}}(t) = 
\sum_{i:\, c_i(t)=S_J(t)} P_i(t)\, h_{TJ}^{(i)}(t),
\end{equation}
where $h_{TJ}^{(i)}(t)$ is the channel gain from transmitter $i$ to the 
jammer. In the no-fading baseline (Section~\ref{sec:baseline_model}), 
this gain is normalized to~1.

The jammer follows a Markovian threshold-based reactive policy:
\begin{itemize}

\item If $P_J^{\mathrm{rx}}(t) \ge \theta(t)$ (threshold exceeded), the 
jammer is \emph{triggered}:
\begin{itemize}
\item With probability $p$, it stays on the same channel and sets 
$\theta(t{+}1)=\theta_H$.
\item With probability $1-p$, it hops to the next channel in cyclic 
order and sets $\theta(t{+}1)=\theta_L$.
\end{itemize}

\item If $P_J^{\mathrm{rx}}(t) < \theta(t)$ (threshold not exceeded):
\begin{itemize}
\item With probability $q$, it stays on the same channel and sets 
$\theta(t{+}1)=\theta_L$.
\item With probability $1-q$, it hops to the next channel and sets 
$\theta(t{+}1)=\theta_H$.
\end{itemize}

\end{itemize}

This formulation captures the jammer’s adaptive behavior, where its state $(S_J(t), \theta(t))$ evolves as a finite-state Markov chain influenced by the transmitters’ joint actions.

\vspace{-.05in}
\subsection{Optimization Objective} 
The main design goal is to maximize the long-term average network throughput by learning adaptive channel and power allocation strategies for the $N$ transmitter–receiver pairs:

\vspace{-.15in}
\begin{align} 
\max_{\{c_i(t), P_i(t)\}} \quad & \lim_{T \to \infty} \frac{1}{T} \sum_{t=1}^T R_{\text{sum}}(t) \\ \text{s.t.} \quad & c_i(t) \in \mathcal{C}, \quad \forall i,t, \\ & P_i(t) \in \mathcal{P}, \quad \forall i,t, \\ & \mathbb{E}[P_i(t)] \leq P_{\max}, \quad \forall i. 
\end{align} 

The constraints ensure that each transmitter selects a valid channel and power level at every slot, while respecting average transmit power limits. Due to the stochastic and reactive nature of the jammer, this optimization problem is non-convex and cannot be solved analytically. Instead, we employ MARL methods to approximate optimal decentralized policies.

\vspace{-.01in}
\section{MARL Approach} 
\label{sec:learning-based}

We formulate the adaptive channel and power allocation problem as a MARL task under the CTDE paradigm. Each transmitter--receiver pair is modeled as an autonomous agent with a deep Q-network (DQN) \emph{that observes only local information}, while a centralized mixer network (QMIX) coordinates training across agents; during execution, each agent acts solely on its local observations.

\subsection{Local Agent Observations and Actions}
At each time slot $t$, agent $i$ has access only to local information, namely the total sensed power on each channel at its own receiver, the action it selected in the previous slot (channel and power), and the immediate reward corresponding to the achieved throughput. Based on this local observation,  agent $i$ selects a joint action $a_i(t) = (c_i(t), P_i(t))$, which determines its transmission channel and power level. This decentralized execution framework reflects the constraints of practical wireless systems, where global system information is not directly available to individual nodes.

\paragraph*{MDP tuples under CTDE}
Under CTDE, each agent $i$ observes local tuples $(o_i(t), a_i(t), r_i(t), o_i(t{+}1))$ where
$o_i(t)$ stacks the per-channel received power at receiver $i$, the previous action $a_i(t{-}1)$, and the previous reward $r_i(t{-}1)$, and
$a_i(t) = (c_i(t),P_i(t))$.
The per-agent reward $r_i(t)$ is defined in terms of the instantaneous throughput $R_i(t)$ minus a co-channel interference penalty.
During centralized training, we additionally construct joint tuples $(s_t,\mathbf{a}_t,r_t,s_{t+1})$, where $s_t$ contains global features such as aggregate per-channel transmit power and the jammer state, $\mathbf{a}_t = (a_1(t),\dots,a_N(t))$, and $r_t = \sum_{i=1}^N r_i(t)$ is the team reward.
The centralized mixer network only uses $s_t$ during training; at execution time, it is discarded and all decisions are made in a decentralized POMDP setting based on $\{o_i(t)\}$.

\subsection{Centralized Training with QMIX}
During training, a centralized \emph{mixer network} aggregates the individual Q-values of all agents into a global joint action-value function $Q_{\text{tot}}$. Unlike the local agents, the mixer has access to global state features, such as aggregate per-channel transmit power and jammer state, that are unavailable during execution. This additional information enables more effective coordination among agents during training.

The QMIX architecture enforces a monotonicity constraint, expressed as $\frac{\partial Q_{\text{tot}}}{\partial Q_i} \geq 0$, which ensures that $Q_{\text{tot}}$ can be represented as a monotonic combination of the individual agent Q-values $\{Q_i\}$. As a result, an increase in any agent’s Q-value cannot reduce the global Q-function, thereby guaranteeing consistent credit assignment across agents while still allowing decentralized execution at test time.

Training proceeds by backpropagating the temporal-difference loss of the global Q-function through the mixer into each agent’s local Q-network. Although execution relies only on local observations, policies are shaped cooperatively during training to align with the global objective. Consequently, agents learn to coordinate their channel selections to minimize mutual interference, avoid persistent collisions with nearby transmitters, and adapt transmission strategies in response to the reactive jammer’s dynamics, ultimately maximizing long-term network throughput.

\subsection{Execution Phase}
Once training is complete, each agent executes its learned policy independently using only local observations, and the centralized mixer is no longer required. This ensures scalability and distributed operation in practical network deployments.

To track environmental drift, we optionally continue off-policy learning during deployment. Agents log compact summaries of local experience and periodically upload them to a centralized replay buffer. The server aggregates these logs, performs batched QMIX updates, and pushes refreshed network parameters back to agents at configurable intervals. Action selection remains fully decentralized and low-latency; in harder regimes (e.g., smaller $M$), more frequent synchronizations help sustain coordination and jammer avoidance.

To evaluate performance, we first consider a simplified model where channel gains are constant and determined solely by distance-dependent path loss, without random fading. In this case, channel reuse is disallowed by penalizing simultaneous access in the reward function, yielding an analytically tractable benchmark cost that serves as a baseline for comparison with the proposed QMIX policy.

We then extend the model to allow channel reuse when agents are sufficiently far apart so that interference remains limited, thereby enabling higher cumulative rewards. In this general setting, we introduce two rule-based policies that serve as additional baselines. These benchmarks allow us to assess the robustness of QMIX under both slow and fast Rayleigh fading environments.

\section{Baseline Model without Channel Reuse}
\label{sec:baseline_model}
We first evaluate a simplified scenario in which channel gains are constant and determined solely by distance-dependent path loss (no random fading), and channel reuse among agents is disallowed. Under this setting, we can characterize an oracle benchmark that achieves the optimal per-agent throughput via perfect coordination and jammer avoidance. This benchmark provides an upper bound for decentralized policies and serves as a reference for the proposed QMIX approach.

\paragraph*{Genie-aided oracle under no reuse} To make the baseline precise, we formalize the best possible (oracle) strategy when channels have constant path loss, reuse is not allowed (at most one agent per channel), and there is a single reactive jammer occupying one channel per slot.

\begin{lemma}[Oracle allocation with one reactive jammer] \label{lem:oracle_noreuse} 
Consider $M\!\le\!N$ channels and $N$ agents with per-slot power cap $P_{\max}$, noise variance $\sigma^2$, and no channel reuse. The jammer senses aggregate power on its channel and triggers when the total exceeds a threshold $\theta\in\{\theta_L,\theta_H\}$ with $\Pr\{\theta=\theta_L\}=q$ and $\Pr\{\theta=\theta_H\}=1-q$. Under symmetric direct gains (normalized to~1), the throughput-maximizing oracle: (i) assigns $M\!-\!1$ agents to the non-jammed channels at power $P_{\max}$, (ii) assigns one agent to the jammed channel using a power strictly below the instantaneous threshold (denoted $P_\theta<\theta$), and (iii) idles the remaining $N\!-\!M$ agents (if any). The resulting expected average per-agent rate is 
\vspace{-.01in}
\begin{equation} 
\vspace{-.01in}
\label{eq:Rstar_oracle} 
\begin{split} 
R^{\star} &= \frac{M-1}{N}\,\log_2\!\Bigl(1+\tfrac{P_{\max}}{\sigma^2}\Bigr) \\ &\quad + \frac{1}{N}\!\left[ q\,\log_2\!\Bigl(1+\tfrac{P_{L}}{\sigma^2}\Bigr) + (1-q)\,\log_2\!\Bigl(1+\tfrac{P_{H}}{\sigma^2}\Bigr) \right]. 
\end{split} 
\end{equation}
where $P_L<\theta_L$ and $P_H<\theta_H$ are the conservative powers used on the jammed channel when the jammer employs thresholds $\theta_L$ and $\theta_H$, respectively. 
\end{lemma}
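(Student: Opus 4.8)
The plan is to prove both directions: exhibit the stated allocation, check that it is feasible, compute its long-run average per-agent rate, and then argue that no feasible policy can do better, so the allocation is a maximizer and $R^\star$ is its exact value.

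\textbf{Structural reduction.} Because channel reuse is forbidden, at most one agent occupies each of the $M$ channels per slot, so at most $M$ agents transmit and the remaining $N-M$ are necessarily idle (contributing $0$ to $R_{\mathrm{sum}}$, which is nonetheless normalized by the full $N$). Partition the occupied channels into the $M-1$ channels not used by the jammer and the single channel $S_J(t)$. On each jammer-free channel the lone active agent sees neither co-channel interference nor the jammer, so with direct gain normalized to $1$ its SINR is $P/\sigma^2$ and its rate $\log_2(1+P/\sigma^2)$ is increasing and concave in $P$; under the per-slot cap it is maximized at $P=P_{\max}$ (and under an average cap Jensen's inequality gives the same constant-power conclusion), yielding exactly $\log_2(1+P_{\max}/\sigma^2)$ per such channel.

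\textbf{The jammed channel.} The genie observes $(S_J(t),\theta(t))$ and may set the active agent's power either (a) strictly below the threshold, so $P_J^{\mathrm{rx}}(t)<\theta(t)$, the jammer does not fire, and the rate is $\log_2(1+P/\sigma^2)$ with $P<\theta(t)$; or (b) above the threshold, firing the jammer for a slot rate of at most $\log_2(1+P_{\max}/(P_J+\sigma^2))$. I would then show option~(a) is optimal: the largest discrete level $P_\theta\in\mathcal P$ with $P_\theta<\theta(t)$ already yields a per-slot rate no smaller than the firing rate under the severe-jammer regime assumed here, and firing gives no compensating \emph{future} advantage, because all channels carry the same normalized gain (so which channel the jammer occupies is immaterial to the genie) and the only remaining lever --- biasing the threshold process --- is again dominated under the severe-jammer assumption together with $\theta_L<\theta_H$. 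This per-slot domination, combined with ruling out the future-manipulation incentive, is the step I expect to be the main obstacle and the place where the standing assumptions on $P_J$ and on $\theta_L,\theta_H$ must be invoked.

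\textbf{Steady-state value and optimality.} Under the stated oracle the aggregate power on the jammer's channel stays strictly below the threshold in every slot, so the jammer perpetually follows the ``not-exceeded'' branch and, independently of the past, moves to threshold $\theta_L$ with probability $q$ and to $\theta_H$ with probability $1-q$; hence $\{\theta(t)\}$ is i.i.d.\ with $\Pr\{\theta=\theta_L\}=q$ and $\Pr\{\theta=\theta_H\}=1-q$, and by the strong law of large numbers the long-run average rate on that channel equals $q\log_2(1+P_L/\sigma^2)+(1-q)\log_2(1+P_H/\sigma^2)$, with $P_L,P_H$ the largest discrete levels below $\theta_L,\theta_H$. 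Summing the $M-1$ clean-channel rates and this jammed-channel rate and dividing by $N$ gives $R^\star$ as in \eqref{eq:Rstar_oracle}. Finally, for the upper bound, any feasible policy respects the same per-slot budget --- at most $M-1$ interference-free channels, each capped by $\log_2(1+P_{\max}/\sigma^2)$, plus at most one channel shared with the jammer, whose slot rate is bounded by the larger of options (a) and (b) and hence by the oracle's choice on that channel --- so its long-run average per-agent rate cannot exceed $R^\star$; since the oracle attains $R^\star$, it is optimal and the bound is tight.
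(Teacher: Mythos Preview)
Your proposal is correct and follows the same line as the paper's sketch: no-reuse forces at most $M$ active agents, monotonicity fills the $M{-}1$ safe channels at $P_{\max}$, the jammed channel is used at sub-threshold power, and averaging over the two threshold states yields \eqref{eq:Rstar_oracle}. You are more explicit than the paper about the severe-jammer assumption needed so that option~(b) never dominates option~(a) (and about ruling out future-manipulation incentives), and about how the never-trigger policy renders $\{\theta(t)\}$ i.i.d.\ with the stated law, but the skeleton of the argument is identical.
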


\begin{proof}[Sketch] With no reuse, inter-agent interference vanishes when each active channel is occupied by at most one agent. Since $\log_2(1+\mathrm{SNR})$ is increasing in transmit power, every safe (non-jammed) channel should be filled at $P_{\max}$, giving the first term in \eqref{eq:Rstar_oracle}. On the jammed channel, transmitting above the current threshold triggers the jammer and collapses the rate; hence the oracle always uses power strictly below the threshold. Averaging over the two threshold states yields the second term. Reassigning any of the $M{-}1$ safe-channel agents to idle, or moving them to the jammed channel, weakly decreases the sum rate; activating more than $M$ agents violates the no-reuse constraint. Thus, the stated allocation and rate are optimal under the model assumptions. 
\end{proof}

\noindent\textit{Remark.} 
Equation~\eqref{eq:Rstar_oracle} can be instantiated with the normalized powers used in our experiments (e.g., $P_L$ and $P_H$ corresponding to conservative settings under $\theta_L$ and $\theta_H$), yielding the closed-form oracle curves shown alongside our QMIX results. Intuitively, to achieve $R^\star$ the controller must (i) place one user per safe channel at $P_{\max}$, and (ii) on the jammed channel, transmit just below the threshold; the remaining agents must idle when $M<N$.

We consider $N=10$ transmitter--receiver pairs (shown 
in Fig.~\ref{fig:topology10}) and vary the number of channels $M \in \{10,8,4\}$. The jammer employs two sensing thresholds, a high level $\theta_H\!=\!0.4$ and a low level $\theta_L\!=\!0.2$. In Fig.~\ref{fig:baseline_subplots}, panels (a)–(c), the orange curve shows the average reward per agent per slot during training. The blue curve represents the penalized reward, where each pair of interfering agents reduces the reward by one. The convergence of the blue curve to the orange curve demonstrates that the agents successfully learn to avoid mutual interference. In addition, the green markers indicate the average throughput per agent once training is paused and agents rely solely on their local information to decide channel and power levels. These points highlight that the agents acquire a robust policy that fully avoids interference and adapts to the jammer using only local observations.

\begin{figure*}[t]
\centering
\subfloat[$M=10$ channels\label{fig:based10}]{
    \includegraphics[width=0.3\textwidth]{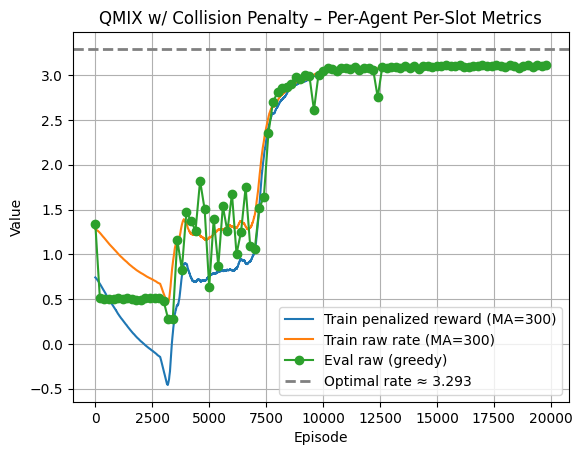}
}
\hfill
\subfloat[$M=8$ channels\label{fig:based8}]{
    \includegraphics[width=0.3\textwidth]{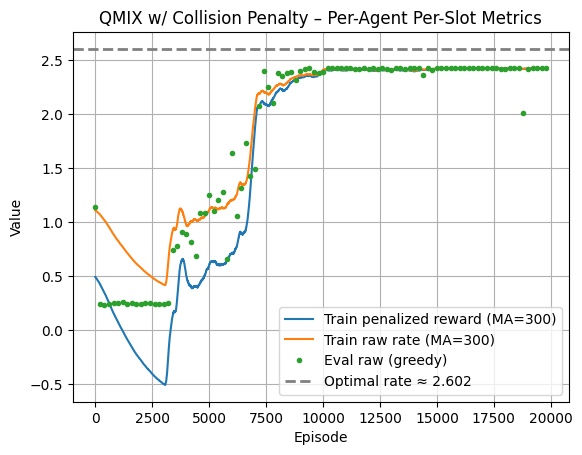}
}
\hfill
\subfloat[$M=4$ channels\label{fig:based4}]{
    \includegraphics[width=0.3\textwidth]{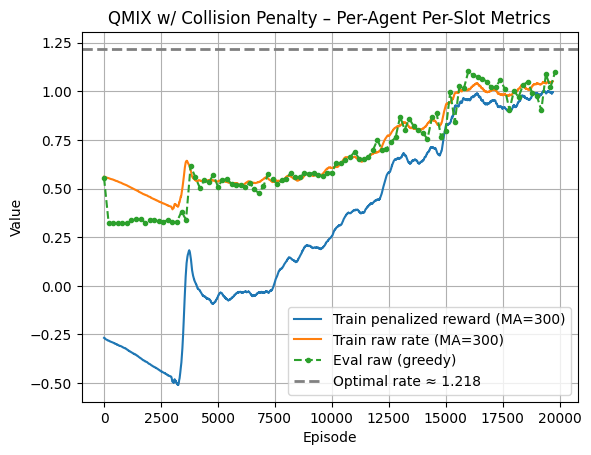}
}
\caption{Comparison for $N=10$ agents: QMIX vs. oracle throughput under perfect coordination and jammer avoidance. Orange: average reward during training (raw rate); Blue: penalized reward accounting for interference; Green: average throughput per agent under decentralized execution with only local information.}
\label{fig:baseline_subplots}
\vspace{-.1in}
\end{figure*}

As the number of channels increases, contention diminishes and QMIX closely approaches the oracle benchmark, indicating effective coordination and jammer avoidance. Conversely, when $M$ is smaller (e.g., $M=4$), the gap to the oracle widens and the learning curves exhibit longer transients with a lower asymptote. This slower convergence reflects the greater coordination burden among agents and the need to infer the jammer’s behavior under tighter spectrum reuse. Notably, the learned policies are \emph{not} static channel assignments: agents follow stochastic, observation-driven strategies that opportunistically switch channels based on local sensing, leading to slot-to-slot variation rather than fixed per-agent channels. Nonetheless, QMIX continues to sustain a significant fraction of the oracle performance, confirming its ability to learn robust decentralized policies even under stringent channel constraints.

\vspace{-.00in}
\section{Generalized Model with Channel Reuse under Rayleigh Fading}
\label{sec:general_model}

We next extend the model to allow channel reuse among agents that are sufficiently far apart, so that interference remains limited and overall throughput can be improved. In addition, the channel model is generalized to incorporate small-scale Rayleigh fading in addition to distance-dependent path loss we considered so far. The channel gain from transmitter $i$ to receiver $j$ is 
\vspace{-.1in}
\begin{equation} 
\vspace{-.1in}
h_{TR}^{(i,j)}(t) = \frac{\tilde{h}_{TR}^{(i,j)}(t)}{d_{ij}^{\alpha}}, 
\end{equation} 
where $d_{ij}$ is the distance between nodes, $\alpha$ is the path loss exponent, and $\tilde{h}_{TR}^{(i,j)}(t)$ is a Rayleigh fading coefficient with unit mean power. Similarly, the jammer-to-receiver and transmitter-to-jammer links are modeled as 
\begin{align} 
h_{JR}^{(i)}(t) 
= \frac{\tilde{h}_{JR}^{(i)}(t)}{d_{J,i}^{\alpha}}, \quad \text{and}\quad h_{TJ}^{(i)}(t) 
= \frac{\tilde{h}_{TJ}^{(i)}(t)}{d_{i,J}^{\alpha}}. \end{align}
All fading terms $\tilde{h}(t)$ are independent circularly symmetric complex Gaussian random variables with zero mean and unit variance, corresponding to Rayleigh fading. A block-fading model is assumed, where coefficients remain constant over a coherence interval of $T_c$ slots and change independently across blocks.

To benchmark the proposed QMIX approach, we also implement two lightweight rule-based policies that rely only on local sensing information.

\textbf{UCB with Power Adaptation:} Each agent maintains statistics for every channel $j$ (number of plays $n_j$, average reward $\mu_j$) and computes 
\vspace{-.15in} 
\begin{equation} 
I_j = \mu_j + c \sqrt{\tfrac{\ln t}{n_j}}, 
\end{equation} 
where $c>0$ controls exploration. The selected channel is $c^\star=\operatorname*{arg\,max}_{j\in\mathcal{C}} I_j$, and transmit power is adapted to sensed interference $s$ via $P(s)=\{\,1.0 \text{ if } s\le 0.15;\; 0.7 \text{ if } 0.15<s\le 0.35;\; 0.4 \text{ if } s>0.35\,\}.$

\textbf{Stateless Heuristic:} The agent remains on its previous channel if the sensed interference is below a threshold ($0.2$) and the previous reward exceeds $0.1$; otherwise, it switches to the channel with the lowest interference. Power is again chosen according to the same interference thresholds as above.

These two baselines represent different trade-offs: the UCB policy balances exploration and exploitation, while the stateless heuristic is purely reactive. They serve as useful comparison points to assess the robustness and performance gains of QMIX under channel reuse and fading.

\paragraph*{Distance-aware interference penalty} We refine the reward to penalize co-channel interference proportionally to distance, so farther pairs incur smaller penalties. Specifically, if agents $i$ and $j$ use the same channel at slot $t$, the interference penalty added to the global cost is
\begin{equation}
\vspace{-.1in}
\label{eq:dist_penalty}
\Delta r_{ij}(t) \;=\; \frac{\lambda}{d_{ij}^{\beta}},
\end{equation}
where $\lambda>0$ and $\beta\ge1$ are tunable parameters. This encourages channel reuse among distant agents while discouraging harmful co-channel clustering.

We use Rayleigh block-fading with coherence $T_c=100$ slots (one episode); channel coefficients remain constant within an episode and are i.i.d.\ across episodes. The distance-aware penalty in \eqref{eq:dist_penalty} is added to the reward during training to encourage faster convergence and promote throughput-enhancing channel reuse.

Fig.~\ref{fig:topos} illustrates the layout and Figs.~\ref{fig:perf5}-\ref{fig:perf10} present learning curves. Orange line shows the average reward per agent per slot during training; blue line depicts the penalized reward using the distance-aware interference term; and green markers are the average per-agent throughput under decentralized execution when training is paused (agents use only local observations).

\begin{figure}[t]
\centering
\subfloat[$N{=}5$ topology\label{fig:topology5}]{
  \includegraphics[width=0.47\columnwidth]{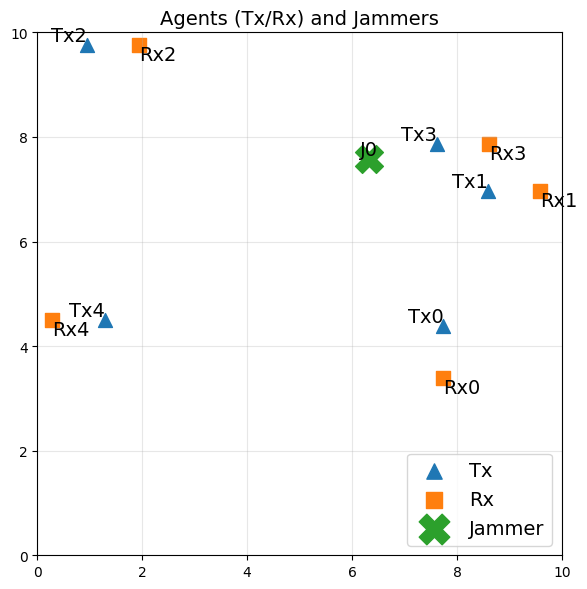}
}\hfill
\subfloat[$N{=}10$ topology\label{fig:topology10}]{
  \includegraphics[width=0.47\columnwidth]{topology.png}
}
\caption{Spatial layouts for $M{=}4$ channels and one jammer.}
\label{fig:topos}
\vspace{-.15in}
\end{figure}

\begin{figure}[t]
\centering
\includegraphics[width=0.8\columnwidth]{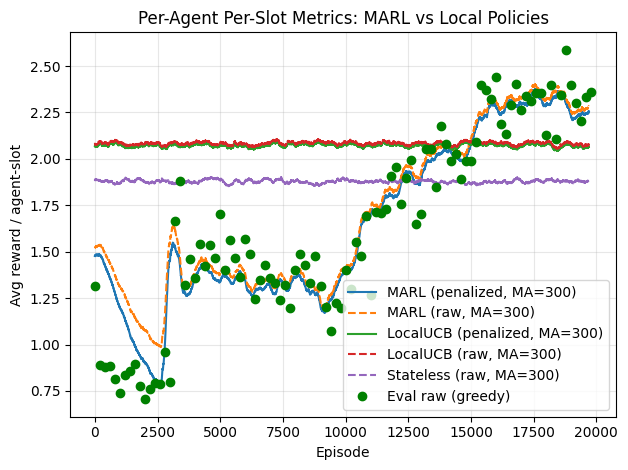}
\caption{Average reward per agent per slot for $N{=}5$, $M{=}4$, one jammer.}
\label{fig:perf5}
\vspace{-.15in}
\end{figure}

\begin{figure}[t]
\centering
\includegraphics[width=0.8\columnwidth]{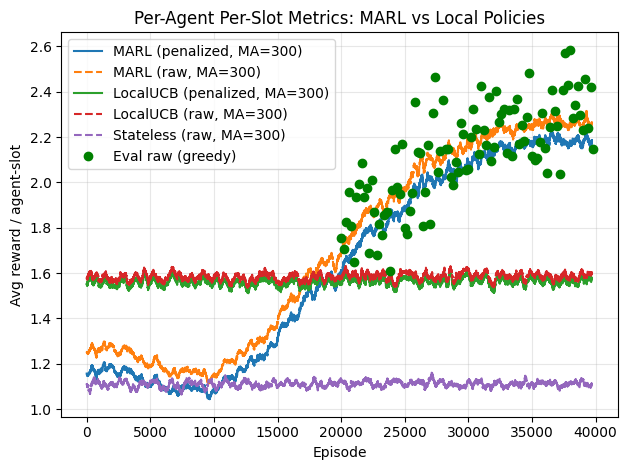}
\caption{Average reward per agent per slot for $N{=}10$, $M{=}4$, one jammer.}
\label{fig:perf10}
\vspace{-.15in}
\end{figure}

In Fig.~\ref{fig:perf5}, QMIX requires an initial learning phase but converges to the best performance, surpassing both local UCB and the stateless heuristic; the blue curve tracking the orange curve indicates that agents learn to suppress harmful co-channel interference under the distance-aware penalty.

In Fig.~\ref{fig:perf10}, the spectrum remains fixed at $M=4$ channels while the number of agents increases from $N=5$ to $N=10$, substantially raising contention and jammer exposure. Despite the harsher setting, QMIX scales robustly and further widens its margin over local UCB and the stateless heuristic (exceeding 50\% at convergence). The penalized reward (blue) continues to closely track the training reward (orange), indicating sustained suppression of co-channel interference under the distance-aware penalty, while the execution markers (green) confirm that these gains persist under fully decentralized operation using only local observations.

\vspace{-.00in}
\section{Conclusion} 
\label{sec:conclusion} 
This paper studied the problem of coordinated anti-jam decision making in swarm networks, where multiple transmitter–receiver pairs must sustain reliable connectivity in the presence of a reactive jammer that may adapt its jamming decisions. We proposed a MARL framework based on QMIX that enables decentralized agents to cooperate through a centralized training process while executing locally. Unlike single-agent or stateless policies, QMIX explicitly accounts for the impact of joint transmit actions on future jamming states, thereby treating anti-jamming as a true MDP. We demonstrated that QMIX converges rapidly to cooperative strategies that approach the performance of a genie-aided benchmark in which all agents have prior knowledge of jamming actions. Moreover, QMIX consistently outperformed rule-based baselines, including LocalUCB and a Stateless reactive policy, by achieving higher throughput and reducing the success rate of jamming attacks across a variety of network and jammer conditions. In addition, we showed that QMIX can effectively coordinate channel access among multiple transmitter–receiver pairs, balancing spectrum efficiency with resilience. These results highlight the promise of MARL as a practical and scalable tool for ensuring swarm resilience against adaptive jammers. Future work will extend this framework to larger-scale swarms with heterogeneous agents, investigate alternative MARL architectures and learning-based or multi-jammer adversaries, and incorporate additional dimensions such as latency, energy constraints, and noisy sensing.

\bibliographystyle{IEEEtran}
\bibliography{new_references}
\end{document}